\title{Square root meadows\thanks{Partially supported by the 
	  Dutch NWO Jacquard project \emph{Symbiosis}, 
	  project number 638.003.611.
	  In the context of Symbiosis we investigate equational specifications of data types for financial budgets. 
	  This leads to Tuplix Calculus~\cite{BPvdZ}, which makes essential use of meadows. 
	  But financial mathematics uses more operators than those named in the meadow signature. For instance the 
	  definition of volatility makes use of a square root operator, which, 
	  if only for for that reason, enters the operator set needed to specify financial matters.}}
\author{
	Jan A.\ Bergstra
    \and
	Inge Bethke\\
\\
  {\small
	  Section Software Engineering,
	  Informatics Institute,
	  University of Amsterdam}\\
	{\small URL: \url{www.science.uva.nl/~{inge,janb}}
	}
}
\date{}
\newcommand{\SA}{\axname{Signs}}
\newcommand{\SR}{\axname{SquareRoots}}
\newcommand{\Mod}{\axname{Mod}}
\newcommand{\sg}{\operatorname{{\mathbf s}}}
\newtheorem{theorem}{Theorem}
\newtheorem{corollary}{Corollary} 
\newtheorem{definition}{Definition}   
\theoremstyle{definition}
\newcommand{\IL}{\axname{IL}}
\newcommand{\Md}{\axname{Md}}
\newcommand{\Ril}{\axname{RIL}}
\newcommand{\axname}[1]{\ensuremath{\textit{#1}}}
\newcommand{\wortel}[1]{\sqrt[\_]{#1}}
\begin{document}

\maketitle

\begin{abstract}
Let $\mathbb{Q}_0$ denote the rational numbers expanded to a \emph{meadow} by
totalizing inversion such that 
$0^{-1}=0$. $\mathbb{Q}_0$ can be expanded by a total sign function $\sg$ that extracts the sign of a 
rational number.
In this paper we discuss an extension $\mathbb{Q}_0(\sg ,\wortel{\  \ })$ of the signed rationals in which every number has a unique
square root.
\end{abstract}

\section{Introduction}
\label{sec:Intro}
This paper is a contribution to the algebraic specification of number systems. Advantages and disadvantages of the 
algebraic specification of abstract data types have been amply discussed in the computer science literature. We do not 
add anything
new to these matters here but refer the reader to Wirsing~\cite{W90}, the seminal 1977-paper~\cite{GTWW77} of Goguen et al., and 
the overview in Bj{\o}rner and M.C. Henson~\cite{BH07}. 

The primary algebraic properties of the rational, real and complex numbers are captured by the operations and
axioms of \emph{fields} consisting of the equations that define a commutative ring and two axioms, which are not equations,
that define the inverse operator and the distinctness of the two constants. In particular, fields are \emph{partial} 
algebras|because inversion is undefined at 0|and do not possess an equational axiomatization. They do not constitute 
a variety, i.e., they 
are not closed under products, subalgebras and homomorphic images. In the last 15 years algebraic specification languages with pragmatic ambitions have developed in such a way that partial functions are admitted (see e.g.\ CASL~\cite{ABKKMST02}); nevertheless we feel that the original form of algebraic specifications is still valid for theoretical work because it can lead to more stable and more easily comprehensible specifications.

\emph{Meadows}
originate as the design decision to turn inversion 
(or division if one prefers a binary notation for pragmatic reasons)
into a total operator by means of the assumption that $0^{-1}=0$. 
By doing so the investigation of number systems as abstract data types 
can be carried out within the original framework of algebraic 
specifications without taking any precautions for partial functions 
or for empty sorts. 
The equational specification of the variety of meadows has been proposed by Bergstra, Hirshfeld and Tucker~\cite{BHT08,BT07} and 
has subsequently been
elaborated on in detail in~\cite{BT08}.

Following~\cite{BT07} we write $\mathbb{Q}_0$ for 
the rational numbers expanded to a meadow after 
taking its zero-totalized form. 
The main result
of~\cite{BT07} consists of obtaining an equational initial algebra specification 
of $\mathbb{Q}_0$.
In~\cite{BPvdZ} meadows without proper zero divisors are termed 
\emph{cancellation meadows} and in~\cite{BT08} it is shown that the equational theory of cancellation meadows (there called zero-totalized 
fields) has a finite and complete equational axiomatization.  
In~\cite{BP08} this \emph{finite basis} result is extended to a generic form  
enabling its application to extended signatures. In particular, the equational theory of $\mathbb{Q}_0(\sg)$|the rational numbers 
expanded with a total sign function|is shown to be complete finitely  axiomatizable within equational logic. 
In this paper, we will extend cancellation meadows even further to $\mathbb{Q}_0(\sg ,\  \wortel{\  \ })$|the zero-totalized 
signed prime field with
unique square roots.

The paper is structured as follows: in the next section we recall
the axioms for cancellation meadows and the sign function. In Section~\ref{sec:3} we give a complete axiomatization
for $\mathbb{Q}_0(\sg ,  \wortel{\  \ })$. 
We end the paper with some examples illustrating the usage of signed roots
and some conclusions in Section~\ref{sec:4} and \ref{sec:5}, respectively.

\section{Cancellation meadows}
\label{sec:2}
In this section we introduce cancellation meadows and the sign function,
and represent the \emph{Generic Basis Theorem}
that will be used in Section~\ref{sec:3}. We assume that the reader is familiar with using equations and initial algebra semantics to specify
data types. Some accounts of this are Goguen et al.~\cite{GTWW77}, Kamin~\cite{K79}, Meseguer and Goguen~\cite{MG86}, or Wirsing~\cite{W90}.  The theory of computable fields is surveyed in Stoltenberg-Hansen and Tucker~\cite{ST99}.
Moreover, we use standard notations: typically, we let $\Sigma$ be a signature, $Mod_\Sigma(T)$ the class 
of all $\Sigma$-algebras satisfying all the axioms in a theory $T$, and $I(\Sigma, T)$ the initial $\Sigma$-algebra of the theory $T$.

\begin{table}
\centering
\rule[-2mm]{7cm}{.5pt}
\begin{align*}
	(x+y)+z &= x + (y + z)\\
	x+y     &= y+x\\
	x+0     &= x\\
	x+(-x)  &= 0\\
	(x \cdot y) \cdot  z &= x \cdot  (y \cdot  z)\\
	x \cdot  y &= y \cdot  x\\
	1\cdot x &= x \\
	x \cdot  (y + z) &= x \cdot  y + x \cdot  z\\
	(x^{-1})^{-1} &= x \\
	x \cdot (x \cdot x^{-1}) &= x
\end{align*}
\rule[3mm]{7cm}{.5pt}
\vspace{-5mm}
\caption{The set \Md\ of axioms for meadows}
\label{Md}
\end{table}

In~\cite{BT07} \emph{meadows}
were defined as the members of a variety specified by 12 equations. However, in~\cite{BT08}
it was established that the 10 equations in Table~\ref{Md}
imply those used in~\cite{BT07}. Summarizing,
a {meadow} is a commutative ring with unit equipped with
a total unary inverse operation $(\_)^{-1}$
that satisfies the two equations
\begin{align*}
  (x^{-1})^{-1} &= x,   \\
  x\cdot(x \cdot x^{-1}) &= x, \quad(\Ril)
\end{align*}
and in which $0^{-1}=0$. Here \Ril\ abbreviates 
\emph{Restricted Inverse Law}.  We write \Md\ for the set of
axioms in Table~\ref{Md}.

From the axioms in \Md\ the following identities are derivable:
\begin{align*}
        (1)^{-1} &= 1,\\
	(0)^{-1}  &= 0,\\
	(-x)^{-1} &= -(x^{-1}),\\
	(x \cdot  y)^{-1} &= x^{-1} \cdot  y^{-1},
\\
0\cdot x  &= 0,\\
	x\cdot -y &= -(x\cdot y),\\
	-(-x)     &= x.
\end{align*}

The term \emph{cancellation meadow} is introduced in~\cite{BPvdZ} 
for a zero-totalized field that satisfies the so-called 
``cancellation axiom"
\[
x \neq 0 ~\&~ x\cdot y = x\cdot z ~\longrightarrow~ y=z.\]
An equivalent version of the cancellation axiom that we shall
further use in this paper is the
\emph{Inverse Law} (\IL), i.e., the conditional axiom
\begin{align*}
  x\neq 0 ~\longrightarrow~ x\cdot x^{-1}=1.
 \quad(\IL)
\end{align*}
So \IL\ states that there are no proper zero divisors. 
(Another equivalent formulation of the cancellation property is 
$x\cdot y=0~\longrightarrow~x=0\text{ or }y=0$.)

We write $\Sigma_m=(0,1,+,\cdot,-,^{-1})$ for the
signature of (cancellation) meadows and we shall often write $1/t$ 
or 
$\frac 1 t $
for $t^{-1}$, $tu$ for $t\cdot u$, $t/u$ for $t\cdot 1/u$, 
$t-u$ for $t+(-u)$, 
and freely use numerals and exponentiation with constant
integer exponents. We shall further write
\[1_t\text{ for } \frac t t\qquad\text{and}\qquad0_t\text{ for }1-1_t,
\]
so, $0_0=1_1=1$, $0_1=1_0=0$, and for all terms $t$,
\[0_t+1_t=1.\]
Moreover, from $\Ril$ we get 
\begin{align}
1_x^2=1_x
\end{align}
and therefore also
\begin{align}
0_x^2=(1-1_x)^2=1-2\cdot 1_x + 1_x^2=1-1_x=0_x.
\end{align}

We obtain \emph{signed meadows} by 
extending the signature $\Sigma_m=(0,1,+,\cdot,-,^{-1})$ of meadows with the unary sign function $\sg(\_)$.
We write $\Sigma_{ms}$ for this extended signature, so 
$\Sigma_{ms} = (0,1,+,\cdot,-,^{-1},\sg)$.
The sign function $\sg$
presupposes an ordering $<$ of its domain and is defined as follows:

\[\sg(x)=\begin{cases}
-1&\text{if }x<0,\\
0&\text{if }x=0,\\
1&\text{if }x>0.
\end{cases}\]

One can define $\sg$ in an equational manner
by the set \SA\ of axioms 
given in Table~\ref{t:sign}. 
First, notice that by \Md\ and axiom~\eqref{ax1} 
(or axiom~\eqref{ax2}) we find 
\[\sg(0)=0\quad\text{and}\quad\sg(1)=1.\]
Then, observe that in combination with the inverse law \IL, 
axiom \eqref{ax6} is an equational representation of the conditional
equational axiom
\[\sg(x)=\sg(y)~\longrightarrow ~\sg(x+y)=\sg(x).\]
From \Md\ and axioms
\eqref{ax3}--\eqref{ax6} one 
can easily compute $\sg(t)$ for any closed term $t$. An interesting consequence of $\Md \cup \SA$ is the 
idempotency of $\sg$, i.e.\, $\Md \cup \SA \vdash \sg(\sg(x))=\sg(x)$ (see Proposition 2 in~\cite{BP08}).

\begin{table}[hbtp]
\centering
\rule[-2mm]{8.6cm}{.5pt}
\begin{align}
\label{ax1}
\sg(1_x)&=1_x\\
\label{ax2}
\sg(0_x)&=0_x\\
\label{ax3}
\sg(-1)&=-1\\
\label{ax4}
\sg(x^{-1})&=\sg(x)\\
\label{ax5}
\sg(x\cdot y)&=\sg(x)\cdot \sg(y)\\
\label{ax6}
0_{\sg(x)-\sg(y)}\cdot (\sg(x+y)-\sg(x))&=0
\end{align}
\rule[3mm]{8.6cm}{.5pt}
\vspace{-5mm}
\caption{The set \SA\ of axioms for the sign function}
\label{t:sign}
\end{table}
\noindent

The \emph{finite basis result}
for the equational theory of cancellation meadows is formulated in a generic way so that
it can be used for any expansion of a meadow that satisfies
the propagation properties defined below.

\begin{definition}
Let $\Sigma$ be an extension of $\Sigma_m=(0,1,+,\cdot,-,^{-1})$, the
signature of meadows. Let $E\supseteq \Md$ (with \Md\ the set
of axioms for meadows given in Table~\ref{Md}).
\begin{enumerate}
\item
$(\Sigma,E)$ has the \textbf{propagation property for pseudo units} if for
each pair of $\Sigma$-terms $t,r$ and context $C[~]$,
\[E\vdash 1_t\cdot C[r]=1_t\cdot C[1_t\cdot r].\]
\item
$(\Sigma,E)$ has the \textbf{propagation property for pseudo zeros} if for
each pair of $\Sigma$-terms $t,r$ and context $C[~]$,
\[E\vdash 0_t\cdot C[r]=0_t\cdot C[0_t\cdot r].\]
\end{enumerate}
\end{definition}
Preservation of these propagation properties admits the following 
nice result:
\begin{theorem}
[Generic Basis Theorem for Cancellation Meadows]
\label{st}
If $\Sigma\supseteq \Sigma_m,~ E\supseteq \Md$ and $(\Sigma,E)$ 
has the
pseudo unit and the pseudo zero propagation property,
then $E$ is a basis (a complete axiomatisation) of 
$\Mod_\Sigma(E\cup\IL)$.
\end{theorem}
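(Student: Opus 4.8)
The plan is to prove completeness by the standard route: show that every $\Sigma$-term can be brought into a canonical ``fractional'' form modulo $E$, and that two such canonical forms denoting the same element of every model of $E\cup\IL$ are already provably equal from $E$ alone. First I would observe that the pseudo unit and pseudo zero propagation properties let us push the ``scalars'' $1_t$ and $0_t$ through arbitrary contexts, which is exactly what is needed to perform a case analysis on whether a subterm is zero or not \emph{inside} equational logic, without appealing to $\IL$. The key structural lemma will be: for every $\Sigma$-term $t$ there exist $\Sigma_m$-terms (or at least terms in a controlled normal form) $p,q$ with $E\vdash t = p\cdot q^{-1}$ and with $1_q$ behaving as the ``definedness guard'' of $t$; the propagation properties are what make this decomposition stable under the operations of $\Sigma$, since whenever a new operator is applied one can use propagation to move all guards to the front.

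Next I would set up the completeness argument proper. Suppose $E\cup\IL\vdash t = s$, equivalently $t=s$ holds in every cancellation meadow expansion satisfying $E$; I want $E\vdash t=s$. Using the decomposition lemma write $E\vdash t = p_1 q_1^{-1}$ and $E\vdash s = p_2 q_2^{-1}$. The heart of the matter is a ``splitting'' step: decompose the identity along the Boolean algebra of idempotents generated by the guards $1_{q_1},1_{q_2}$, i.e. multiply through by each of the four complementary idempotents $1_{q_1}1_{q_2}$, $1_{q_1}0_{q_2}$, $0_{q_1}1_{q_2}$, $0_{q_1}0_{q_2}$. On each piece the propagation properties collapse the relevant inverses to genuine inverses or to zero, so that on each piece the equation becomes one between honest polynomial (quotient-free after clearing denominators) expressions. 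For those, validity in all cancellation meadows reduces — via the known finite basis / initial algebra results for $\mathbb{Q}_0$ quoted from \cite{BT07,BT08} and the generic machinery of \cite{BP08} — to provability from $\Md$, hence from $E$.

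The main obstacle I expect is making the ``splitting along idempotents'' rigorous purely equationally: $\IL$ is genuinely used to know that in every model $1_q\in\{0,1\}$, so after multiplying by the idempotents one really has reduced the number of cases, but turning ``valid in all models after multiplying by $e$'' into ``$E\vdash e\cdot t = e\cdot s$'' requires that the guarded equation be reducible to a pure cancellation-meadow (in fact pure commutative-ring) identity, and this is precisely where the propagation properties must be invoked to eliminate the guards they carry. A secondary technical point is handling the case $0_{q_1}0_{q_2}$ — where both denominators vanish — and showing the two sides agree there as well; this typically follows because $0_q\cdot q^{-1} = 0_q\cdot 0 = 0$ is derivable, so both sides reduce to the same guarded constant. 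Finally I would assemble the four pieces using $\sum$ of the complementary idempotents $=1$ (an $E$-consequence, since $1_q+0_q=1$ and products of these partition unity), concluding $E\vdash t=s$. I would also note that soundness — $E\cup\IL$ holds in $\Mod_\Sigma(E\cup\IL)$ — is immediate, so only completeness needs the above work.
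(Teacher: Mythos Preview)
The paper does not prove this theorem at all: it is quoted from~\cite{BP08} (whose title is precisely ``A generic basis theorem for cancellation meadows'') and then \emph{applied} in Corollary~1. So there is no proof in the present paper to compare your proposal against.

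Independently of that, your plan has a genuine gap. The theorem is stated for an \emph{arbitrary} signature extension $\Sigma\supseteq\Sigma_m$ and axiom set $E\supseteq\Md$, yet your key structural lemma asks that every $\Sigma$-term be provably equal to $p\cdot q^{-1}$ with $p,q$ being $\Sigma_m$-terms (you hedge with ``controlled normal form'', but the rest of the argument needs the strong version). For the very extensions the paper cares about this is false: $\sg(x)$ and $\wortel{x}$ are not quotients of $\Sigma_m$-terms. The propagation properties let you push $1_t$ and $0_t$ through contexts; they do not eliminate the new operators. Consequently the later step, where you claim the guarded pieces become ``honest polynomial expressions'' whose validity reduces to provability from $\Md$ via the results for $\mathbb{Q}_0$ in~\cite{BT07,BT08}, does not go through: the pieces still contain the extra function symbols, and results specific to $\Sigma_m$ say nothing about them. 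There is also a circularity: you invoke ``the generic machinery of~\cite{BP08}'' in that step, but that \emph{is} the theorem you are trying to prove.

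The idempotent-splitting idea is in the right spirit, but the actual argument (in~\cite{BP08}) uses it on the model side rather than as a syntactic normal-form reduction: one shows that any countermodel of $t=s$ among models of $E$ can be cut down, using the propagation properties, to a countermodel that also satisfies $\IL$. That route never needs to express $\Sigma$-terms in $\Sigma_m$.
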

Bergstra and Ponse~\cite{BP08} proved that $\Md$ and $\Md \cup \SA$ satisfy both propagation properties and
are therefore complete axiomatizations of 
$\Mod_\Sigma(\Md\cup\IL)$ and $\Mod_\Sigma(\Md \cup \SA\cup\IL)$, respectively.

\section{Square root meadows} 
\label{sec:3}
A plausible way to totalize the square root operation is to postulate $\sqrt{-1}=i$ and to abandon the 
domain of signed fields in favour of the complex numbers. Here we choose a different approach by stipulating
$\sqrt{x}=-\sqrt{-x}$ for $x<0$. In order to avoid confusion with the principal square root function we deviate from the standard
notation and introduce the unary operation $\wortel{\_}$ called \emph{signed square root}.
We write $\Sigma_{mss}$ for this extended signature, so 
$\Sigma_{mss} = (0,1,+,\cdot,-,^{-1},\sg,\  \wortel{\ })$,
and define the signed square root operation in an equational manner
by the set \SR\ of axioms 
given in Table~\ref{t:wortel}.

\begin{table}[hbtp]
\centering
\rule[-2mm]{8.6cm}{.5pt}
\begin{align}
\label{ax7}
\wortel{x^{-1}}&=(\wortel{x})^{-1}\\
\label{ax8}
\wortel{x\cdot y}&=\wortel{x}\cdot \wortel{y}\\
\label{ax9}
\wortel{x\cdot x\cdot \sg(x)}&=x\\
\label{ax10}
\sg(\wortel{x}-\wortel{y})&=\sg(x-y)
\end{align}
\rule[3mm]{8.6cm}{.5pt}
\vspace{-5mm}
\caption{The set \SR\ of axioms for the square root}
\label{t:wortel}
\end{table}
\noindent

Some additional consequences of the $\Md\cup\SA\cup \SR$ axioms are these:
\begin{align}
\label{6}
\wortel{\sg(x)}&=\sg(x)\text{ because 
$\wortel{\sg(x)}\begin{array}[t]{l}=\wortel{\sg(xxx^{-1})}=\wortel{\sg(x)\sg(x)\sg(x^{-1})}
=\wortel{\sg(x)\sg(x)\sg(x)}\\=\wortel{\sg(x)\sg(x)\sg(\sg(x))}=\sg(x),\end{array}$}\\
\label{7}
\wortel{1_x}&=1_x\text{ because 
$\wortel{1_x}=\wortel{\sg(1_x)}=\sg(1_x)=1_x$,}\\
\label{7a}
\wortel{0_x}&=0_x\text{ similarly,}\\
\label{8}
\wortel{-x}&=-\wortel{x}\text{
because $\wortel{-x}\begin{array}[t]{l}=\wortel{-1\cdot x}=\wortel{-1}\cdot \wortel{x}=\wortel{\sg(-1)}\cdot\wortel{x}\\
=\sg(-1)\cdot\wortel{x}=-1\cdot x=-x,\end{array}$}\\
\label{9}
\wortel{x^2}&=x\cdot \sg(x)\text{
because $\wortel{x^2}\begin{array}[t]{l}=\wortel{x^2\cdot 1_x}=\wortel{x^2}\cdot 1_x=\wortel{x^2}\cdot \sg(1_x)
=\wortel{x^2}\cdot \sg(x)^2\\=\wortel{x^2}\cdot \wortel{\sg(x)}\cdot \sg(x)=\wortel{x^2\sg(x)}\cdot \sg(x)=
x\cdot \sg(x)
.\end{array}$}
\end{align}

Since $(\Sigma_{mss},\Md \cup \SA \cup \SR)$ satisfies both propagation properties, we can apply Theorem~\ref{st}. 
\begin{corollary}
The set of axioms $\Md\cup \SA\cup \SR$
is a complete axiomatisation of 
$\Mod_{\Sigma_{mss}}(\Md\cup \SA \cup \SR \cup\IL)$. 
\end{corollary}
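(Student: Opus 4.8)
The plan is to invoke the Generic Basis Theorem (Theorem~\ref{st}) directly, so the only real work is to verify its hypotheses for the triple $(\Sigma_{mss},\Md\cup\SA\cup\SR)$. The signature inclusion $\Sigma_{mss}\supseteq\Sigma_m$ and the axiom inclusion $\Md\cup\SA\cup\SR\supseteq\Md$ are immediate from the definitions. Bergstra and Ponse have already established that $(\Sigma_{ms},\Md\cup\SA)$ has both propagation properties, so what remains is to show that adjoining the signed square root operator $\wortel{\ }$ together with the four axioms in \SR\ does not destroy either propagation property. Concretely, for every pair of $\Sigma_{mss}$-terms $t,r$ and every $\Sigma_{mss}$-context $C[\ ]$, I must derive
\[
\Md\cup\SA\cup\SR\vdash 1_t\cdot C[r]=1_t\cdot C[1_t\cdot r]
\qquad\text{and}\qquad
\Md\cup\SA\cup\SR\vdash 0_t\cdot C[r]=0_t\cdot C[0_t\cdot r].
\]

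First I would reduce the context case to the one-step case by induction on the structure of $C[\ ]$: it suffices to show that each function symbol $f$ in $\Sigma_{mss}$ commutes appropriately with multiplication by a pseudo-unit (resp.\ pseudo-zero), i.e.\ that $1_t\cdot f(\dots,s,\dots)=1_t\cdot f(\dots,1_t\cdot s,\dots)$, and symmetrically for $0_t$. For the symbols already in $\Sigma_{ms}$ this is exactly what the Bergstra--Ponse result gives us. So the new obligations concern only $f=\wortel{\ }$, namely
\[
1_t\cdot\wortel{r}=1_t\cdot\wortel{1_t\cdot r}
\qquad\text{and}\qquad
0_t\cdot\wortel{r}=0_t\cdot\wortel{0_t\cdot r}.
\]
The pseudo-unit case I would handle using the multiplicativity axiom \eqref{ax8} together with the identity $\wortel{1_t}=1_t$ from \eqref{7} and the idempotency $1_t^2=1_t$ from~(1): compute $1_t\cdot\wortel{1_t\cdot r}=1_t\cdot\wortel{1_t}\cdot\wortel{r}=1_t\cdot 1_t\cdot\wortel{r}=1_t\cdot\wortel{r}$. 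The pseudo-zero case is entirely analogous, using $\wortel{0_t}=0_t$ from \eqref{7a} and $0_t^2=0_t$ from~(2): $0_t\cdot\wortel{0_t\cdot r}=0_t\cdot\wortel{0_t}\cdot\wortel{r}=0_t\cdot 0_t\cdot\wortel{r}=0_t\cdot\wortel{r}$.

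Once both propagation properties are in hand for $(\Sigma_{mss},\Md\cup\SA\cup\SR)$, Theorem~\ref{st} applies verbatim with $\Sigma=\Sigma_{mss}$ and $E=\Md\cup\SA\cup\SR$, yielding that $E$ is a complete axiomatisation of $\Mod_{\Sigma_{mss}}(E\cup\IL)$, which is precisely the statement of the corollary. I do not expect any serious obstacle: the delicate analysis is already packaged inside the Generic Basis Theorem and the Bergstra--Ponse verification for $\Md\cup\SA$, and the genuinely new content is the two short derivations above, whose only subtlety is remembering that $\wortel{\ }$ behaves well on pseudo-units and pseudo-zeros via \eqref{7} and \eqref{7a}. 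The one point worth stating carefully is the structural induction reducing arbitrary contexts to the single-symbol case, since that is what lets us quote the earlier result for the old symbols and only check the new one.
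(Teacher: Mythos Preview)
Your proposal is correct and follows essentially the same approach as the paper: both reduce to checking the propagation properties for the new operator $\wortel{\ }$ only (citing the Bergstra--Ponse result for the $\Sigma_{ms}$-symbols), and both dispatch that case via the same short chain using $1_t^2=1_t$, $\wortel{1_t}=1_t$, and multiplicativity of $\wortel{\ }$, and symmetrically for $0_t$. Your explicit mention of the structural induction on contexts is a welcome clarification the paper leaves implicit.
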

\begin{proof}
We have to prove that the propagation properties for pseudo units 
and pseudo zeros hold in $\Md\cup \SA \cup \SR$. This follows easily
by a case distinction on the forms that $C[r]$
may take. This case distinction has been performed for $\Md\cup \SA$ in~\cite{BP08}.
As an example we consider here the case
$C[\_]\equiv \wortel{\_}$. Then 
\[
1_t\cdot \wortel{r}=1_t^2\cdot \wortel{r}=1_t\cdot \wortel{1_t} \cdot \wortel{r}=
1_t\cdot \wortel{1_t\cdot r}
\]
by (1) and (14).
The propagation property for pseudo zeros is proved in a similar way applying (2) and (15).
\end{proof}

We denote by $\mathbb{Q}_0(\sg , \wortel{\ })$ the zero-totalized signed prime field that contains
$\mathbb{Q}$ and is closed under $\wortel{\ }$. Note that $\mathbb{Q}_0(\sg , \wortel{\ })$ 
is a computable data type (see e.g. Bergstra and Tucker~\cite{BT95}). This statement still requires an efficient 
and readable proof.

To provide an initial algebra specification for $\mathbb{Q}_0(\sg , \wortel{\ })$ may prove a difficult task. In the much simpler case of $\mathbb{Q}_0$ we know that
$\mathbb{Q}_0\cong I(\Sigma_m,\Md + L_4)$ were $L_n$ is the \emph{Lagrange} equation
\[
\frac{1+ x_1^2 +x_2^2 +\cdots + x_n^2}{1+ x_1^2 +x_2^2 +\cdots + x_n^2}=1.\]
Observe that $\mathbb{Q}_0\not\cong I(\Sigma_m,\Md + L_1)$. Indeed, the totalized Galois field 
$(\mathbb{F}_3)_0\models \Md +L_1$: squares in $(\mathbb{F}_3)_0$ are 0 and 1 and thus $1+x^2\neq 0$ in $(\mathbb{F}_3)_0$ 
from which we infer $(\mathbb{F}_3)_0\models \frac{1+ x^2}{1+ x^2}=1$. If $I(\Sigma_m, \Md +L_1)\cong \mathbb{Q}_0$, then $(\mathbb{F}_3)_0$ is a homomorphic image of $\mathbb{Q}_0$. Thus suppose $\phi:\mathbb{Q}_0 \rightarrow \mathbb({F}_3)_0$ is a homomorphism. 
Then|in $(\mathbb{F}_3)_0$| 
\[
0=\frac{1+1+1}{1+1+1}=\frac{\phi(1+1+1)}{\phi(1+1+1)}=\phi(\frac{1+1+1}{1+1+1})=\phi(1)=1,
\]
which is not the case. This then leaves us with the question as to whether or not $\mathbb{Q}_0\cong I(\Sigma_m,\Md + L_2)$.

If for some prime number $p$ the Diophantine equation $x^2+y^2 \equiv (-1) mod\ p$ has no solution, we have that $(\mathbb{F}_p)_0\models L_2$ and a similar argument establishes that $I(\Sigma_m, \Md+L_2)\not\cong\mathbb{Q}_0$. 
However, the existence of such $p$ is not known to us. 

In any case the initial algebra specification of $\mathbb{Q}_0$ can only be considered \emph{stable} once
\begin{enumerate}
\item it has been shown that $\mathbb{Q}_0\not \cong I(\Sigma_m,\Md + L_n)$ for $n=2,3$, and
\item it has been shown that there exists no finite $\omega$-complete|and hence preferable|spec\-ifi\-cation for $\mathbb{Q}_0$ either.
\end{enumerate} 
What follows from these considerations is that the development of a \emph{definitive} initial algebra specification for $\mathbb{Q}_0(\sg , \wortel{\ })$ will be a proces that takes several stages.  Only an initial step has been taken here and more work lies in the future.

\section{Examples}\label{sec:4}
In this section, we briefly discuss 3 examples in which signed roots can play a role.

In the special theory of relativity one frequently encounters equations of the form
\[
\frac{\sqrt{1+\beta}}{\sqrt{1-\beta^2}}=\frac{1}{\sqrt{1-\beta}}
\]
where $\beta=\frac{v}{c}$ and $v$ is the velocity of a moving light source. This particular equation is defined if 
$|v|<c$ and leads to an undefined expression in the case that $c\leq |v|$. 
The theory of signed square roots offers a total representation of the above equation by
\[
\frac{\wortel{1+\beta}}{\wortel{1-\beta^2}}=\frac{\sg^2(1+\beta)}{\wortel{1-\beta}}.
\]
In  such a way arithmetic laws stemming from the special theory of relativity can be modified in order to be universally valid without implicit
or explicit
assumptions. Notice that $\sqrt{-1}=i$ is not essential for the special theory of relativity: e.g.\ the main formula in the Minkowski space|the pseudo-Euclidean space in which special relativity is most conveniently formulated|is the mathematical 
theorem for angles $\alpha$ in spacetime
\[
v^2=-c^2 \Rightarrow \cos(\alpha/v)=\cosh(\alpha/c)\text{ and } v\sin(\alpha/v)=c\sinh(\alpha/c)
\]
which can be justified from calculations on formal power series without the use of complex numbers. 
A similar observation applies to the area of quantum computing. There complex numbers are used and the equation $i^2=1$, but square roots
are only applied to non-negative numbers.

The theory of signed square roots can be extended to  complex numbers by the axioms given in Table~\ref{t:complex}. 
Here we denote by $\overline{x}$ the complex conjugate
of the complex number $x$ and by $Re(x)$ its real part. This, however,  will require a restriction of  the axioms in Table~\ref{t:wortel} to real numbers|e.g.\ 
Axiom (10) becomes $\wortel{Re(x)\cdot Re(y)}=\wortel{Re(x)}\cdot \wortel{Re(y)}$ etc.

\begin{table}[hbtp]
\centering
\rule[-2mm]{8.6cm}{.5pt}
\begin{align}
\label{ax11}
\sg(x)&=\sg(Re(x))\\
\label{ax12}
\wortel{x}&=\wortel{Re(x)}\\
\label{ax13}
Re(x)&=\frac{1}{2}(x + \overline{x})
\end{align}
\rule[3mm]{8.6cm}{.5pt}
\caption{The signed square root for complex numbers}\label{t:complex}
\end{table}
\noindent

In \cite{BP08a} meadows equipped with differentiation operators are introduced. Differential meadows can be equipped with a signed 
square root operator by the axioms given in Tabel~\ref{t:diff}. Axiom (22) can actually be derived from Axiom (21) and the 
equational axiomatization of differential meadows. The existence of
non-trivial differential cancellation meadows with signed square roots is not an obvious matter but requires a modification of the 
existence proof given in~\cite{BP08a}.

\begin{table}[hbtp]
\centering
\rule[-2mm]{8.6cm}{.5pt}
\begin{align}
\label{ax14}
\frac{\partial}{\partial x}\sg(y)&=0\\
\label{ax15}
\frac{\partial}{\partial x}\wortel{y}&=\frac{\sg(y)}{2}(\wortel{y})^{-1}\cdot \frac{\partial}{\partial x}y
\end{align}
\rule[3mm]{8.6cm}{.5pt}
\caption{The signed square root for differential meadows}
\label{t:diff}
\end{table}
\noindent

\section{Conclusion}\label{sec:5}
In this paper we introduced square root meadows. We provided a finite axiomatization for cancellation meadows
expanded with signed square roots and proved its completeness using the Generic Basis Theorem. In addition, 
we gave a few  examples where the theory of signed square roots can make a contribution.
A couple of standard questions|for example, the decidability of the equational theory of  
$\mathbb{Q}_0(\sg , \wortel{\ })$|is left for further research. One step in this direction is the construction of a 
complete
term rewrite system that specifies $\mathbb{Q}_0(\sg , \wortel{\ })$| if this exists at all|or anyway an initial
algebra specification.

\end{document}